\begin{document}
\title{\LARGE \bf Local Nash Equilibria are Isolated, Strict Local Nash Equilibria in `Almost All' Zero-Sum Continuous Games}

\author{Eric Mazumdar$^1$ and
\thanks{$^1$Department of Electrical Engineering and Computer Sciences, University of California, Berkeley, Berkeley, CA. email: {\tt mazumdar@berkeley.edu}}%
Lillian J. Ratliff$^2$\thanks{$^2$Department of Electrical and Computer Engineering, University of Washington, Seattle, WA. email: {\tt ratliffl@uw.edu}}
}
\maketitle

\begin{abstract}
 We prove that differential Nash equilibria are generic
amongst local Nash equilibria in continuous
zero-sum games. That is, there exists an open-dense subset of
zero-sum games for which local Nash equilibria are non-degenerate differential Nash
equilibria.
The result extends previous results to  the zero-sum setting, where we
obtain even stronger results; in particular, we show that local Nash equilibria
are generically hyperbolic critical points. 
We further show that differential Nash equilibria of zero-sum games are structurally stable. The purpose for presenting these
extensions is the recent renewed interest in zero-sum games within machine
learning and optimization. Adversarial learning and generative adversarial network approaches are
touted to be more robust than the alternative. Zero-sum games are at the heart
of such approaches. Many works proceed under the assumption of hyperbolicity of
critical points. Our results justify this assumption
by showing `almost all' zero-sum games admit local Nash equilibria that are
hyperbolic.
\end{abstract}

\IEEEpeerreviewmaketitle

\section{Introduction}
\label{sec:intro}

With machine learning algorithms increasingly being placed in more complex, real world settings, there has been a renewed interest in continuous games~\cite{ mertikopoulos:2019aa,zhang:2010aa, mazumdar:2018aa}, and particularly zero-sum continuous games~\cite{mazumdar:2019aa, daskalakis:2018aa, goodfellow:2014aa, jin:2019aa}. 
Adversarial learning~\cite{daskalakis:2017aa,mertikopoulos:2018aa}, robust reinforcement learning~\cite{li:2019aa, pinto:2017aa}, and generative adversarial networks~\cite{goodfellow:2014aa} all make use of zero-sum games played on highly non-convex functions to achieve remarkable results. 

Though progress is being made, a theoretical understanding of the equilibria of such games is lacking. 
In particular, many of the approaches to learning equilibria in these machine learning applications are gradient-based.
For instance, consider an adversarial learning setting where the goal is to learn a model or network by optimizing a function $f\in C^r(\Theta\times W,\mb{R})$ over $\theta\in \Theta$ where $w\in W$ is chosen by an adversary. A general approach to this problem is to study the coupled learning dynamics that arise when  one \emph{player} is descending $f$ and the other is ascending it---e.g.,
\begin{align*}
   \bmat{ \theta^+\\ w^+}=\bmat{\theta-\gamma D_\theta f(\theta,w)\\
w+\eta D_w f(\theta,w)}.
\end{align*}

%
Most convergence analysis depends on an assumption of \emph{local convexity in the game space} around an equilibrium---that is, nearby Nash equilibria the Jacobian of the gradient-based learning rule is assumed to be locally positive definite. 
Indeed, with respect to the above example, in consideration of the limiting dynamics
 $\dot{x}=-\omega(x)$ where $x=(\theta, w)$ and $\omega(x)=(D_\theta f(x), -D_wf(x))$,
many of the convergence guarantees in this setting proceed under the assumption that around critical points, the Jacobian 
\[J(\theta, w)=\bmat{D_{\theta}^2f(\theta,w) & D_{\theta w}f(\theta,w)\\ -D_{w \theta}f(\theta,w) & -D_w^2f(\theta, w)}\]
is positive definite---i.e., there is some notion of \emph{local convexity in the game space}.
Given the structural assumptions often invoked in the analysis of these learning approaches, two questions naturally arise: 
\begin{itemize}
    \item  Is this a `robust' assumption in the sense of \emph{structural stability}---i.e., does the property persist under smooth perturbations to the game?;
    \item Is this assumption satisfied for `almost all zero-sum games' in the sense of \emph{genericity}?
\end{itemize}
Building on the work in~\cite{ratliff:2013aa, ratliff:2014aa, ratliff:2016aa}, this paper addresses these two questions. 

Towards this end, we leverage a refinement of the local Nash equilibrium concept that defines an equilibrium in terms of the first- and second-order optimality conditions for each player holding all other players fixed.   This refinement has implicitly in its definition this notion of local convexity in the game space; it also has a structure that is particularly amenable to computation and which can be exploited in learning since it is characterized in terms of local information. Efforts to show this refinement is both structurally stable and generic aid in justifying its broad use.

\subsection{Contributions}
The contributions are summarized as follows: 

a. We prove that differential Nash equilibria---a refinement of local Nash equilibria defined in terms of first- and second-order conditions which characterize local optimality for players---are generic amongst local Nash equilibria in continuous zero-sum games (Theorem~\ref{thm:main}). This implies that almost all zero-sum games played on continuous functions admit local Nash equilibria that are strict and isolated.

b. Exploiting the underlying structure of zero-sum game---i.e., the game is defined in terms of a single sufficiently smooth cost function---we show that all differential Nash equilibria are hyperbolic (Proposition~\ref{prop:hyperbolic}), meaning they are locally exponentially attracting for gradient-play. Combining this fact with the above, we also show that local Nash equilibria are generically hyperbolic (Corollary~\ref{cor:hyperbolic}).

c. We prove that zero-sum games are structurally stable (Theorem~\ref{thm:ss_zsg}); that is, the structure of the game---and hence, its equilibria---is robust to smooth perturbations within the space of zero-sum games.

 In~\cite{ratliff:2013aa, ratliff:2014aa, ratliff:2016aa}, similar results to a.~and c.~were shown for the larger class of general-sum continuous games. Yet, the set of zero-sum games is of zero measure in the space of general-sum continuous games, and hence, the results of this paper are not a direct implication of those results.
 Further, b.~is a much stronger statement than the genericity result in~\cite{ratliff:2014aa}. In particular,~\cite{ratliff:2014aa} shows that non-degenerate differential Nash equilibria are generic amongst local Nash equilibria. We, on the other hand, show that in the class of zero-sum games, \emph{all differential Nash equilibria are non-degenerate, and moreover, hyperbolic}. The latter is a particularly strong result, achievable due to the specific structure of the zero-sum game. Indeed, two-player zero-sum continuous games are defined completely in terms of a single sufficiently smooth function---i.e., given $f\in C^r(X,\mb{R})$, the corresponding zero-sum game is $(f,-f)$.
%
%

Moreover, the work in this paper focuses on a class of games of particular import to the machine learning and robust control communities, where many recent works have made the assumption of hyperbolicity of local Nash equilibria without a thorough understanding of whether or not such an assumption is restrictive (see e.g. \cite{Daskalakis2018TheLP,balduzzi:2018aa,PotentialGameBanditFeed,VariationalGAN,jin:2019aa}). The results in this paper show that this assumption simply \emph{rules out a measure zero set of zero-sum games}. 


\section{Preliminaries}
 Before developing the main results, we present our general setup, as well as some preliminary game theoretic and mathematical definitions. Additional mathematical preliminaries are included in Appendix~\ref{app:prelims}.
\subsection{Preliminary Definitions}
\label{sec:setup}

In this paper, we consider full information continuous, two-player zero-sum games.
We use the term `player' and `agent' interchangeably. 
Each player $i\in \mc{I}=\{1,2\}$ selects an action $x_i$ from a
\emph{topological space} $X_i$ in order to minimize its cost $f_i:X\rar \mb{R}$
where $X=X_1\times X_2$ is the \emph{joint strategy space} of all the
agents. Note that $f_i$ depends on $x_{-i}$
which is the collection of actions of all other agents excluding agent
$i$---that is, $f_i:(x_i,x_{-i})\mapsto f_i(x_i,x_{-i})\in \mb{R}$.  Furthermore, each
$X_i$ can be finite-dimensional smooth manifolds or
infinite-dimensional Banach manifolds. Each player's cost function $f_i$ is
assumed to be sufficiently smooth. 

A two-player zero-sum game is characterized by a cost function $f\in C^r(X,\mb{R})$ in the sense that the first player aims to minimize $f$ with respect to $x_1$ and the second player aims to maximize $f$ with respect to $x_2$---that is, $f_1\equiv f$ and $\f_2\equiv -f$.
Hence, given a function $f$, we denote a two-player zero-sum game by $(f,-f)$ where $f\in C^r(X,\mb{R})$.

As is common in the study of games, we adopt the Nash equilibrium concept to
characterize the interaction between agents.
\begin{definition}
  \label{def:SLNE}
  A strategy $x=(\pxone{1}, \pxone{2})\in X$ is a {local Nash equilibrium}
  for the game $(f_1,f_2)=(\f, -\f)$ if for each $i\in\mc{I}$ there exists
  an open set $W_i\subset X_i$ such
  that $\pxone{i}\in W_i$ and 
  \[f_i(x_i,x_{-i})\leq f_i(x_i',x_{-i}), \ \ \forall\ \pxone{i'}\in
  W_i\bs\{\pxone{i}\}.\]
    If the above inequalities are strict, then we say
    $(\pxone{1},x_2)$ is a {strict local Nash equilibrium}.
    If $W_i=X_i$ for each $i$, then $(\pxone{1},x_2)$ is a {global Nash
  equilibrium}. 
\end{definition}

In~\cite{ratliff:2013aa} and subsequent works~\cite{ratliff:2014aa,ratliff:2016aa}, a refinement of the local Nash equilibrium concept known as a
\emph{differential Nash equilibrium} was introduced. This refinement
characterizes local Nash in terms of first- and second-order conditions on
player cost functions, and even in the more general non-convex setting, a
differential Nash equilibrium was shown to be well-defined and independent of
the choice of coordinates on $X$. Moreover, for general sum games, differential Nash were shown to be
generic and structurally stable in $n$-player games. 

Towards defining the differential Nash concept, we introduce the following
mathematical object. A \emph{differential game form} is a differential 1-form $\omega: X\rar T^\ast X$ defined by
\[\textstyle\omega=\psi_{X_1}\circ d\f-\psi_{X_2}\circ d\f\]
    where $\psi_{X_i}$ are the natural bundle maps 
    $\psi_{X_1}: T^\ast X\rar T^\ast X$
    that annihilate those components of the co-vector field $d\f$
    corresponding to $X_1$ and analogously for $\psi_{X_2}$.
    Note that when each $X_i$ is a finite dimensional manifold of dimensions $m_i$
(e.g., Euclidean space $\mb{R}^{m_i}$), then the differential game form has the 
coordinate representation,
\[ 
  \textstyle  \omega_\psi=\sum_{j=1}^{m_1}\frac{\partial( \f\circ
\psi^{-1})}{\partial y_1^j}dy_1^j+\sum_{j=1}^{m_2}\frac{\partial( -\f\circ
\psi^{-1})}{\partial y_2^j}dy_2^j,\]
for product chart $(U,\psi)$ in $X$ at $x=(x_1,\ldots, x_n)$ with local
coordinates $(y_1^1, \ldots, y_{1}^{m_1}, y_2^1, \ldots, y_2^{m_2})$ and
where $U=U_1\times U_2$ and $\psi=\psi_1\times \psi_2$.
The differential game form captures a differential view of the
strategic interaction between the players. Note that each player's cost
function depends on its own choice variable as well as all the other
players’ choice variables. However, each player can only affect their
payoff by adjusting their own strategy.

Critical points for the game can be characterized by the differential game form.
\begin{definition}
   A point $x\in X$ is said to be a {critical point} for the game if
    $\omega(x)=0$. 
    \label{def:criticalpt} 
\end{definition}
   In the single agent case (i.e., optimization of a single cost function),  critical points can be further classified as
    local minima, local maxima, or saddles by
    looking at second-order conditions. 
    Analogous concepts exist for continuous games.
\begin{proposition}[Proposition~2~\cite{ratliff:2013aa}]
    If $x\in X$ is a local Nash equilibrium for $(f_1,f_2)=(f,-f)$, then $\omega(x)=0$ and
    $D_{i}^2f_i(x)\geq 0$ for all $i\in \mc{I}$.
    \label{prop:necessary}
\end{proposition}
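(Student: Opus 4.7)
My plan is to decouple the two-player local Nash condition into two single-player unconstrained minimization problems and then invoke classical first- and second-order necessary optimality conditions on a smooth manifold. Once those conditions are in hand, the vanishing of the differential game form follows by direct substitution into its coordinate representation.

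First I would unwind Definition~\ref{def:SLNE}: at a local Nash equilibrium $x=(x_1^\ast,x_2^\ast)$, for each $i\in\mc{I}$ there is an open neighborhood $W_i\subset X_i$ of $x_i^\ast$ on which the slice map $y\mapsto f_i(y,x_{-i}^\ast)$ attains a local minimum at $y=x_i^\ast$. Since $f_i$ is $C^r$ with $r\geq 2$ and $x_i^\ast$ is an interior point of the manifold $X_i$, the standard necessary conditions for an unconstrained local minimum (Fermat's theorem and the Hessian test, in their Banach-manifold form if the $X_i$ are infinite-dimensional) yield
\[
D_i f_i(x)=0 \quad\text{and}\quad D_i^2 f_i(x)\geq 0
\]
as a symmetric bilinear form on $T_{x_i^\ast}X_i$. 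This directly establishes the second-order conclusion for each $i$; specialized to the zero-sum case, it becomes $D_1^2 f(x)\geq 0$ and $D_2^2 f(x)\leq 0$.

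Next I would assemble the two first-order conditions into $\omega(x)=0$. Working in a product chart $(U,\psi)=(U_1\times U_2,\psi_1\times\psi_2)$ around $x$, the coordinate expression for $\omega$ recalled in the excerpt decomposes $\omega(x)$ into a sum of a piece built from the $X_1$-partials of $f$ and a piece built from the $X_2$-partials of $-f$. The relations $D_1 f_1(x)=D_1 f(x)=0$ and $D_2 f_2(x)=-D_2 f(x)=0$ annihilate each of these pieces, giving $\omega(x)=0$ in this chart. Since $\omega$ is defined coordinate-free via the bundle maps $\psi_{X_i}$, and its vanishing at a point is a chart-independent statement, the conclusion is global.

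I do not anticipate a substantive obstacle: the main content is Fermat's theorem applied twice, with the only care needed being (i) the routine verification that the coordinate-free definition of $\omega$ via $\psi_{X_1}\circ df - \psi_{X_2}\circ df$ matches the sum-of-partials expression in any product chart, and (ii) when $X_i$ is an infinite-dimensional Banach manifold, the use of the Fréchet-derivative versions of the first- and second-order necessary conditions. Both are standard, so the proof is essentially a bookkeeping exercise on top of two invocations of the classical unconstrained optimality conditions.
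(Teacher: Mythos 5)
Your proposal is correct and is essentially the canonical argument: the paper itself offers no proof of Proposition~\ref{prop:necessary}, citing it instead from~\cite{ratliff:2013aa}, where the argument is exactly your two applications of the first- and second-order necessary conditions to each player's decoupled slice problem $y\mapsto f_i(y,x_{-i})$, followed by reading off $\omega(x)=0$ componentwise in a product chart. The only point worth flagging is cosmetic: the paper's two descriptions of the bundle maps $\psi_{X_i}$ (main text versus appendix) disagree by an overall sign convention, but since $\omega(x)=0$ holds iff both $D_1f(x)=0$ and $D_2f(x)=0$, this does not affect your conclusion.
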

  These are necessary conditions for a local Nash
  equilibrium. 
There are also sufficient conditions for Nash equilibria. 
  Such sufficient conditions define differential Nash
  equilibria~\cite{ratliff:2016aa, ratliff:2014aa}.
  \begin{definition}
  \label{def:DNE}
  A strategy $x\in X$ is a {differential Nash equilibrium}
  for $(f_1,f_2)=(f,-f)$ if $\omega(\pt)=0$  and  $\D^2_{i}\f_i(x)$ is
  positive--definite for each $i\in\mc{I}$. 
\end{definition}
Differential Nash need not be isolated. However, if
 $d\omega(x)$ is
non-degenerate for $x$ a differential Nash, where
$d\omega=d(\psi_{X_1}\circ d\f)-d(\psi_{X_2}\circ d\f),$
then $x$ is an \emph{isolated strict local Nash equilibrium}. Intrinsically,
$d\omega\in T_2^0(X)$ is a tensor field; at a point $x$ where $\omega(x)=0$, it
determines a bilinear form constructed from the uniquely determined continuous, symmetric, bilinear forms
$\{d^2f_i(x)\}_{i=1}^n$. 
We use the notation
$D\omega$ to denote the bilinear map induced by $d\omega$ which is composed of
the partial derivatives of components of $\omega$. For example, consider a
two-player, zero-sum 
game $(\f_1,\f_2)=(f,-f)$. Then, via a slight abuse of notation, the matrix
representation of this bilinear map is given by
\[D\omega(x)=\bmat{D_{1}^2\f(x)  & D_{12}\f(x)\\  -D_{12}^T\f(x)  &
-D_{2}^2\f(x)}.\]


The following definitions are pertinent to our study of genericity and
structural stability of differential Nash equilibria; there are analogous
concepts in dynamical systems~\cite{broer:2010aa}.
    \begin{definition}
        A critical point $x$ is
    {non-degenerate} if $\det(D\omega(x))\neq 0$ (i.e.~$x$ is isolated).
    \label{def:nondegen}
    \end{definition}
    Non-degenerate differential Nash are strictly
    isolated local Nash equilibria~\cite[Theorem~2]{ratliff:2016aa}.
    \begin{definition}
        A critical point $x$ is
    {hyperbolic} if $D\omega(x)$ has no eigenvalues with zero real part.
    \label{def:hyperbolic}
    \end{definition}
All hyperbolic critical points are non-degenerate, but not all non-degenerate
critical points are hyperbolic. Hyperbolic critical points are of particular importance from the point of view of convergence, were they have local guarantees of exponential stability or instability \cite{sastry:1999ab}. We note that even in the more general manifold
setting, these definitions are invariant with respect to the coordinate
chart~\cite{ratliff:2016aa, broer:2010aa}.

%

\subsection{Mathematical Preliminaries}

In order to prove genericity of non-degenerate differential Nash, we now introduce the necessary mathematical preliminaries.

Consider smooth manifolds $X$ and $Y$ of dimension $n_x$ and $n_y$ respectively.
An \emph{$k$--jet} from $X$ to $Y$ is an equivalence class $[x,\f,U]_k$ of
triples $(x,\f,U)$ where $U\subset X$ is an open set, $x\in U$, and $f:U\rar Y$
is a $C^k$ map. The equivalence relation satisfies $[x,\f,U]_k=[y,g,V]_k$ if
$x=y$ and in some pair of charts adapted to $\f$ at $x$, $\f$
and $g$ have the same derivatives up to order $k$. We use the notation
$[x,\f,U]_k=j^k\f(x)$ to denote the $k$--jet of $f$ at $x$. The set of all
$k$--jets from $X$ to $Y$ is denoted by $J^k(X,Y)$. The jet bundle $J^k(X, Y)$
is a smooth manifold (see \cite{Hirsch:1976la} Chapter 2 for the construction).
For each $C^k$ map $f:X\rar Y$ we define a map $j^kf:X\rar J^k(X, Y)$ by $x\mapsto j^kf(x)$ and refer to it as the \emph{$k$--jet extension}. 

\begin{definition}
Let $X$, $Y$ be smooth manifolds and $f:X\rar Y$ be a smooth mapping. Let $Z$ be
a smooth submanifold of $Y$ and $p$ a point in $X$. Then \emph{$f$ intersects
$Z$ transversally at $p$} (denoted $f\pitchfork Z$ at $p$) if either $f(p)\notin
Z$ or $f(p)\in Z$ and $T_{f(p)} Y=T_{f(p)}Z + (f_\ast)_p(T_pX)$.
\end{definition}

For $1\leq k<s\leq \infty$ consider the jet map 
  $j^k:C^s(X, Y)\rar C^{s-k}(X, J^k(X, Y))$
and let $Z\subset J^k(X, Y)$ be a submanifold. 
Define
\begin{equation}
  \trans^s(X, Y; j^k, Z)=\{ h\in C^s(X, Y)|\ j^kh \pitchfork Z\}.
  \label{eq:jettrans}
\end{equation}
A subset of a topological space $X$ is \emph{residual} if it contains the intersection of countably many open--dense sets. We say a property is \emph{generic} if the set of all points of $X$ which possess this property is residual~\cite{broer:2010aa}. 



\begin{theorem}(Jet Transversality Theorem, Chap.~2 \cite{Hirsch:1976la}).
  Let $X$, $Y$ be $C^\infty$ manifolds without boundary, and let $Z\subset
  J^k(X, Y)$ be a $C^\infty$ submanifold. Suppose that $1\leq k<s\leq \infty$.
  Then, $\trans^s(X, Y; j^k, Z)$ is residual and thus dense in $C^s(X, Y)$ endowed with the strong topology, and open if $Z$ is closed. 
\label{thm:jettrans}
\end{theorem}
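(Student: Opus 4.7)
The plan is to derive the result from Thom's parametric transversality theorem via a local construction of perturbations, and then pass to the global conclusion by a Baire category argument in the strong topology.

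First I would reduce to a local problem. Cover $X$ by a countable, locally finite collection of precompact coordinate charts $\{U_\alpha\}$ with $\overline{U_\alpha}\subset V_\alpha$, and define $\mathcal{T}_\alpha=\{h\in C^s(X,Y):\ j^kh\pitchfork Z \text{ on }\overline{U_\alpha}\}$. Since $C^s(X,Y)$ endowed with the strong topology is a Baire space, it suffices to show each $\mathcal{T}_\alpha$ is open and dense; the countable intersection is then residual and agrees with $\trans^s(X,Y;j^k,Z)$.

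The core step is density of $\mathcal{T}_\alpha$. Given $f\in C^s(X,Y)$, I would work in coordinates on $V_\alpha$ and build a finite-dimensional parameter space $P$ of perturbations $\phi_p$ supported in $V_\alpha$, for instance polynomials of degree at most $k$ in the local coordinates multiplied by a fixed bump function equal to one on $U_\alpha$. Define $F(x,p)=j^k(f+\phi_p)(x)$. The key observation is that polynomials of degree $\leq k$ realize every Taylor jet of order $\leq k$ independently at each interior point, so the partial differential $\partial_p F$ is surjective onto the fiber direction of $J^k(X,Y)$; hence $F$ is a submersion and in particular $F\pitchfork Z$. Thom's parametric transversality theorem then yields a full-measure set of parameters $p$ for which $j^k(f+\phi_p)\pitchfork Z$ on $U_\alpha$, and choosing $p$ arbitrarily small produces perturbations of $f$ in $\mathcal{T}_\alpha$ that are arbitrarily close to $f$ in the strong topology. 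Openness of $\mathcal{T}_\alpha$ when $Z$ is closed follows because transversality at a point is an open condition on the $k$-jet, and the strong topology controls $j^k h$ uniformly on the compact set $\overline{U_\alpha}$.

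The main obstacle is verifying that the parametrized map $F$ genuinely lands onto $J^k(X,Y)$ submersively: one must check that multiplying the polynomial family by the bump cutoff does not spoil the independent controllability of the jet coefficients on the interior $U_\alpha$, which succeeds precisely because the cutoff equals one there, so the $k$-jet of $f+\phi_p$ at $x\in U_\alpha$ is $j^k f(x)$ plus the jet of the pure polynomial. A secondary subtlety is that the strong topology is not metrizable on non-compact $X$; however, $C^s(X,Y)$ remains Baire, and the local finiteness of the cover ensures the countable intersection of residual sets stays residual, closing the argument.
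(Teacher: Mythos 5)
This theorem is imported by the paper from Hirsch (Chapter~2) and is not proved in the text, so there is no internal proof to compare against; what you have written is, in outline, the canonical argument from Hirsch and Golubitsky--Guillemin: localize to a countable locally finite family of precompact charts, perturb by degree-$\leq k$ polynomials cut off by a bump function so that the parametrized jet map $F(x,p)=j^k(f+\phi_p)(x)$ is a submersion (the $p$-derivative fills the fiber of $J^k(X,Y)$ because polynomials realize every $k$-jet at an interior point, and the $x$-derivative covers the base), apply parametric transversality, and finish with Baire category. That skeleton is correct and is exactly how the cited result is established.

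Two points in your write-up need repair before it is a proof. First, your Baire argument requires each $\mathcal{T}_\alpha$ to be open (or at least to contain an open dense set), but as you define it, $\mathcal{T}_\alpha=\{h:\ j^kh\pitchfork Z\ \text{on}\ \overline{U_\alpha}\}$ is \emph{not} open when $Z$ is merely a submanifold rather than a closed one: a limit map can have its jet land on $\overline{Z}\setminus Z$ or meet $Z$ non-transversally while the approximating jets miss $Z$ entirely and are vacuously transverse. The standard fix is to exhaust $Z$ by countably many compact pieces $L_i\subset Z$ and intersect over the doubly indexed family ``transverse to $Z$ at every $x\in\overline{U_\alpha}$ with $j^kh(x)\in L_i$,'' each of which \emph{is} open and dense; only when $Z$ is closed does your coarser decomposition yield the stated openness of the full set. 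Second, the parametric transversality step invokes Sard's theorem for the $C^{s-k}$ projection of $F^{-1}(Z)$ onto the parameter space, which requires regularity exceeding $\dim X-\codim Z$; since $s-k$ may equal $1$, this hypothesis can fail. The usual remedy is to first approximate $f$ by a $C^\infty$ map (dense in $C^s(X,Y)$ in the strong topology) and run the perturbation argument there, which also disposes of your worry about non-metrizability. With those two amendments the argument closes.
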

\begin{proposition}
  \label{prop:intersect}
  (Chap.~II.4, Proposition 4.2~\cite{golubitsky:1973aa}). Let $X,Y$ be smooth
  manifolds and $Z\subset Y$ a submanifold. Suppose that $\dim X< \codim Z$. Let
  $f:X\rar Y$ be smooth and suppose that $f\pitchfork Z$. Then, $f(X)\cap Z=\emptyset$.
\end{proposition}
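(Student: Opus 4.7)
The plan is to argue by contradiction: assume the intersection is non-empty, pick a point in it, and use the transversality condition at that point to derive a dimension inequality that directly contradicts the hypothesis $\dim X < \codim Z$.

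First I would suppose, for contradiction, that $f(X) \cap Z \neq \emptyset$, so there exists some $p \in X$ with $q := f(p) \in Z$. Since $f \pitchfork Z$ holds everywhere by assumption, the first clause of the transversality definition (namely $f(p) \notin Z$) fails at $p$, so the second must hold:
\[T_q Y = T_q Z + (f_*)_p(T_p X).\]

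Next I would take dimensions. The right-hand side is a sum of linear subspaces of $T_q Y$, not necessarily direct, so its dimension is at most the sum of the dimensions of the summands. Using $\dim T_q Z = \dim Z$ and $\dim (f_*)_p(T_p X) \leq \dim T_p X = \dim X$, I obtain
\[\dim Y \;\leq\; \dim Z + \dim X,\]
which rearranges to $\codim Z = \dim Y - \dim Z \leq \dim X$. This directly contradicts the standing hypothesis $\dim X < \codim Z$, so the intersection must be empty.

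The entire argument reduces to unpacking the transversality definition at a single hypothetical intersection point and applying elementary linear algebra, so I do not anticipate any serious obstacle. The only point requiring a sentence of care is that $(f_*)_p$ is a linear map out of $T_p X$, so its image can only shrink dimensions, which is precisely what makes the strict inequality $\dim X < \codim Z$ incompatible with the spanning condition coming from $f \pitchfork Z$.
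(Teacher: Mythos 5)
Your argument is correct and is the standard dimension-counting proof of this fact; the paper itself states this proposition as a citation to Golubitsky--Guillemin (Chap.~II.4, Prop.~4.2) without reproducing a proof, and your unpacking of transversality at a hypothetical intersection point, followed by $\dim Y \leq \dim Z + \dim X$ and the contradiction with $\dim X < \codim Z$, is exactly the argument given in that reference. No gaps.
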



The Jet Transversality Theorem and Proposition~\ref{prop:intersect} can be used to show a subset of a jet bundle having a particular set of desired properties is generic. 
Indeed, consider the jet bundle $J^k(X, Y)$ and recall that it is a manifold
that contains jets $j^kf:X\rar J^k(X, Y)$ as its elements where $f\in C^k(X,
Y)$. Let $Z\subset J^k(X, Y)$ be the submanifold of the jet bundle that \emph{does not} possess the desired properties. 
If $\dim X<\codim\ Z$, then for a generic function $f\in C^k(X, Y)$ the image of the $k$--jet extension is disjoint from $Z$ implying that there is an open--dense set of functions having the desired properties. 
It is exactly this approach we use to show the genericity of non-degenerate differential Nash equilibria of zero-sum continuous games.

\section{Theoretical Results}
\label{sec:thm}
In this section, we specialize the results in~\cite{ratliff:2013aa} and \cite{ratliff:2016aa} on genericity and structural stability of differential Nash equilibria to the class of zero-sum games. 
\subsection{Genericity}
\label{subsec:genericity}
To develop the proof that local Nash equilibria of zero-sum games are generically non-degenerate differential Nash equilibria, we leverage the fact that it is a generic property of sufficiently smooth functions that all critical points are non-degenerate. 
\begin{lemma}[{\cite[Chapter~1]{broer:2010aa}}]
    For $C^r$ functions, $r\geq 2$ on $\mb{R}^n$, or on a manifold, it is a
    generic property that all the critical points are non-degenerate.
    \label{lem:generic}
\end{lemma}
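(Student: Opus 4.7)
The plan is to apply the Jet Transversality Theorem (Theorem~\ref{thm:jettrans}) to the $2$-jet extension $j^2 f : X \to J^2(X,\mathbb{R})$, since non-degeneracy of a critical point is a condition on the $2$-jet of $f$. Concretely, I would parameterize a point of $J^2(X,\mathbb{R})$ (in a local chart) by $(x, y, p, H)$, where $x \in X$ is the base point, $y \in \mathbb{R}$ is the function value, $p \in \mathbb{R}^n$ represents $Df(x)$, and $H$ is a symmetric $n\times n$ matrix representing $D^2 f(x)$. The ``bad'' subset is then
\begin{equation*}
Z = \{(x,y,p,H) \in J^2(X,\mathbb{R}) : p = 0 \text{ and } \det H = 0\},
\end{equation*}
i.e., jets corresponding to a degenerate critical point. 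A function $f \in C^r(X,\mathbb{R})$ has only non-degenerate critical points if and only if $j^2 f(X) \cap Z = \emptyset$.

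The next step is to realize $Z$ as a finite union of smooth submanifolds and compute their codimensions. The locus $\{p=0\}$ is clearly a closed submanifold of codimension $n$. The locus of symmetric matrices of rank $\le n-1$ is a semi-algebraic variety that is not smooth, but it admits a standard Thom--Boardman stratification by rank: for each $k = 1,\dots,n$ the subset $\Sigma_k = \{H \in \operatorname{Sym}(n) : \operatorname{rank}(H) = n-k\}$ is a smooth submanifold of codimension $k(k+1)/2$ in $\operatorname{Sym}(n)$. Combining, write $Z = \bigsqcup_{k=1}^{n} Z_k$ with
\begin{equation*}
Z_k = \{(x,y,0,H) : \operatorname{rank}(H) = n-k\}, \qquad \operatorname{codim}_{J^2(X,\mathbb{R})} Z_k = n + \tfrac{k(k+1)}{2}.
\end{equation*}

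For every $k \ge 1$ we then have $\operatorname{codim} Z_k \ge n+1 > n = \dim X$. Applying Theorem~\ref{thm:jettrans} to each stratum $Z_k$ individually gives a residual set $\mathcal{R}_k = \trans^r(X,\mathbb{R}; j^2, Z_k) \subset C^r(X,\mathbb{R})$ on which $j^2 f \pitchfork Z_k$; by Proposition~\ref{prop:intersect} this transversality forces $j^2 f(X) \cap Z_k = \emptyset$, because $\dim X < \operatorname{codim} Z_k$. Intersecting the finitely many residual sets $\mathcal{R}_1,\dots,\mathcal{R}_n$ yields a residual (hence dense) subset of $C^r(X,\mathbb{R})$ whose elements have no degenerate critical points, proving genericity.

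The main obstacle is the codimension bookkeeping for degenerate Hessians: one has to recognize that $\{\det H = 0\}$ is not itself smooth and invoke the rank stratification of symmetric matrices to apply the transversality machinery cleanly. A minor secondary point is to confirm that the intersection of finitely many residual (or open-dense) sets remains residual in the strong topology on $C^r(X,\mathbb{R})$, which is immediate from the definition of residuality given in the paper. Once these ingredients are in place, the conclusion is a direct application of the transversality framework already set up in the preliminaries.
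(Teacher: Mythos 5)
The paper does not actually prove this lemma---it is quoted from the reference \cite{broer:2010aa} and used as a black box. Your argument is nevertheless correct, and it is essentially the same jet-transversality argument that the authors themselves run later, in the proof of Theorem~\ref{thm:main}, for the game-theoretic analogue: pass to the $2$-jet bundle, isolate the ``bad'' locus of degenerate critical jets, check that its codimension exceeds $\dim X$, and invoke Theorem~\ref{thm:jettrans} together with Proposition~\ref{prop:intersect}. The one substantive difference is how you handle the non-smooth locus $\{\det H=0\}$: you use the explicit rank stratification of symmetric matrices, under which the rank-$(n-k)$ stratum has codimension $k(k+1)/2$, whereas the paper's parallel argument in Theorem~\ref{thm:main} invokes the Whitney stratification theorem to conclude only that the singular locus is a union of submanifolds of codimension at least $1$. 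Your version gives sharper codimension counts, but both suffice, since any positive codimension on top of the codimension-$n$ condition $p=0$ already pushes $\codim Z_k$ above $\dim X=n$. Two small points worth making explicit: (i) the strata $Z_k$ are well defined subsets of the jet bundle only because you restrict to the critical locus $p=0$, where the Hessian transforms as a bilinear form and its rank is chart-independent; and (ii) Theorem~\ref{thm:jettrans} as stated requires the jet order to be strictly less than the smoothness class, so the borderline case $r=2$ (where $j^2f$ is merely continuous) needs a separate remark---e.g.\ a density-plus-openness argument or the classical proof via Sard's theorem applied to $Df$; the paper's own use of the theorem in Theorem~\ref{thm:main} has the same unaddressed edge case.
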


The above lemma implies that
for a generic function $f\in C^r(X,\mb{R})$ on an $m$--dimensional manifold $X$, the Hessian
\[H(x)=\bmat{D_{1}^2f(x) & \cdots & D_{1m}f(x)\\
    \vdots &\ddots & \vdots\\
D_{m1}f(x)& \cdots & D_{m}^2f(x)}\]
is non-degenerate at critical points---that is, $\det(H(x))\neq 0$. 
\begin{lemma}
    Consider $f\in C^r(X, \mb{R})$ and the zero-sum game $(f,-f)$. For any critical point $x\in X$ (i.e., $x\in\{x\in X|\ \omega(x)=0\}$), 
    $\det(H(x))\neq 0\Longleftrightarrow\det(D\omega(x))\neq 0.$
    \label{lem:equiv}
\end{lemma}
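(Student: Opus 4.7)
The plan is to exhibit an elementary algebraic relationship between the matrix $D\omega(x)$ and the Hessian $H(x)$ that makes the equivalence of non-degeneracy immediate via a determinant computation.

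First, I would use the smoothness of $f$ (with $r\geq 2$) to invoke equality of mixed partials, so that the block representation of the Hessian can be written as
\[ H(x)=\begin{bmatrix} A & B \\ B^{T} & C \end{bmatrix}, \qquad A=D_1^2 f(x),\ B=D_{12}f(x),\ C=D_2^2f(x), \]
where $B^T$ appears in the lower-left because $D_{21}f(x)=D_{12}f(x)^T$. Under the same notation the matrix representation of $D\omega(x)$ given in the paper reads
\[ D\omega(x)=\begin{bmatrix} A & B \\ -B^{T} & -C \end{bmatrix}. \]

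Next, I would observe the factorization
\[ D\omega(x)=\begin{bmatrix} I_{m_1} & 0 \\ 0 & -I_{m_2} \end{bmatrix} H(x), \]
which can be verified by direct block multiplication. Taking determinants and using multiplicativity yields $\det(D\omega(x))=(-1)^{m_2}\det(H(x))$. Since $(-1)^{m_2}\neq 0$, the two determinants vanish together, establishing the desired equivalence.

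There is essentially no obstacle: once the symmetry of mixed partials is used to put $H(x)$ and $D\omega(x)$ into the displayed block forms, the diagonal sign matrix factorization is immediate and finishes the argument. The only thing worth being careful about is the manifold setting described in the preliminaries; there the identity should be read in local product coordinates $(U_1\times U_2,\psi_1\times\psi_2)$, and because non-degeneracy of $D\omega(x)$ and of the Hessian are both coordinate-invariant at critical points (as noted in the paper), the equivalence established in coordinates transfers to the intrinsic statement.
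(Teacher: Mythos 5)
Your proof is correct and follows essentially the same route as the paper: both write $D\omega(x)=PH(x)$ with $P=\mathrm{blockdiag}(I_{m_1},-I_{m_2})$, conclude $\det(D\omega(x))=(-1)^{m_2}\det(H(x))$, and handle the manifold case by appealing to coordinate invariance in a product chart.
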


\begin{proof}
 Before proceeding  we note that in the case that $X$ is a smooth manifold, the stationarity of critical points and definiteness of $H$ and $D\omega$ are  coordinate-invariant properties and hence, independent of coordinate chart~\cite{ratliff:2013aa, ratliff:2014aa, ratliff:2016aa, broer:2010aa}. Thus, to shorten the presentation of proofs, we simply treat the Euclidean case here; showing the more general case simply requires selecting a coordinate chart defined on a neighborhood of the differential Nash, showing the results with respect to this chart, and then invoking coordinate invariance.
 
 Let $x=(x_1,x_2)$ where $X=X_1\times X_2$ and $X_i$ is $m_i$--dimensional. Note that $H(x)$ is equal to $D\omega(x)$ with the last $m_2$ rows scaled
    each by $-1$. Indeed,
    \[D\omega(x)=\bmat{D_1^2f(x) & D_{12}f(x)\\ 
    -D_{12}^Tf(x) & -D_2^2f(x)}\]
    where $D_i^2f(x)$ is $m_i\times m_i$ dimensional for each $i\in\{1,2\}$ and $D_{12}f(x)$ is $m_1\times m_2$ dimensional.
Clearly, $D\omega(x)=PH(x)$ where $P=\text{blockdiag}(I_{m_1},-I_{m_2})$ with each $I_{m_i}$ the $m_i\times m_i$ identity matrix, so that
    $\det(H(x))=(-1)^{m_2}\det(D\omega(x))$.
    Hence, the result holds. 
\end{proof}

This equivalence between the non-degeneracy of the Hessian and the game Jacobian $D\omega$ allows us to lift the fact that non-degeneracy of critical points is a generic property to zero-sum games.

\begin{proposition}
Consider a two-player, zero-sum continuous game $(f,-f)$ defined for $f\in C^r(X,\mb{R})$ with $r\geq 2$. A differential Nash equilibrium is non-degenerate, and furthermore, it is hyperbolic. 
\label{prop:hyperbolic}
\end{proposition}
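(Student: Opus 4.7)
The plan is to prove a single strengthening that immediately yields both conclusions: every eigenvalue of $D\omega(x)$ at a differential Nash equilibrium has strictly positive real part. Strict positivity of the real parts simultaneously rules out a zero eigenvalue (so $D\omega(x)$ is non-degenerate) and any purely imaginary eigenvalue (so $x$ is hyperbolic).

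The key observation exploits the zero-sum structure through equality of mixed partials. Decompose $D\omega(x) = S(x) + N(x)$ into its symmetric and skew-symmetric parts. Since $D_{21}f = (D_{12}f)^T$, the off-diagonal blocks $D_{12}f(x)$ and $-(D_{12}f(x))^T$ of $D\omega(x)$ exactly cancel under symmetrization, leaving the block-diagonal matrix
\[ S(x) = \bmat{D_1^2 f(x) & 0 \\ 0 & -D_2^2 f(x)}. \]
The differential Nash conditions give $D_1^2 f_1(x) = D_1^2 f(x)$ positive definite and $D_2^2 f_2(x) = -D_2^2 f(x)$ positive definite, so $S(x)$ is positive definite. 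This is the whole content of the zero-sum hypothesis in the argument.

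The remainder is a short eigenvalue computation. Let $\lambda$ be any (possibly complex) eigenvalue of $D\omega(x)$ with eigenvector $v = a + ib$, where $a$ and $b$ are real vectors not both zero. From $D\omega(x)v = \lambda v$ one has $v^\ast D\omega(x)\, v = \lambda \|v\|^2$. Because $N(x)$ is real and skew-symmetric, the scalars $a^T N(x)\,a$ and $b^T N(x)\,b$ both vanish, so taking real parts gives
\[ \mathrm{Re}(\lambda)\,\|v\|^2 = a^T S(x)\, a + b^T S(x)\, b > 0, \]
by positive-definiteness of $S(x)$. Hence $\mathrm{Re}(\lambda) > 0$ for every eigenvalue, which is the desired strengthening.

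No step looks like a genuine obstacle; the whole argument reduces to the single structural observation that symmetrizing $D\omega$ annihilates its off-diagonal blocks when the game has zero-sum form. One could alternatively obtain non-degeneracy alone by invoking Lemma~\ref{lem:equiv} together with a Schur-complement computation on the Hessian of $f$, but that route does not yield hyperbolicity, so the direct symmetric-part argument is preferable here.
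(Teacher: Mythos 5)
Your proof is correct and follows essentially the same route as the paper: the paper likewise computes $\mathrm{Re}(\lambda)=\tfrac12 v^\ast(D\omega(x)^T+D\omega(x))v$ and observes that the zero-sum structure makes this symmetric part equal to $\mathrm{blockdiag}(D_1^2f(x),-D_2^2f(x))$, which is positive definite at a differential Nash equilibrium. Your writeup via $v=a+ib$ is just a slightly more explicit rendering of the same eigenvalue argument (and handles the normalization of $v$ a bit more carefully), so there is nothing to add.
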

\begin{proof}
     It is enough to show that all differential Nash are hyperbolic since all hyperbolic equilibria correspond to a non-degenerate $D\omega$. Further, just as we noted in the proof of Lemma~\ref{lem:equiv}, stationarity, definiteness, and non-degeneracy are coordinate-invariant properties. Thus, we simply treat the Euclidean case here.
     
     By definition, at a differential Nash equilibrium $x$ of a zero-sum game, $\omega(x)=0$, $D^2_1f(x)> 0$, and $-D^2_2f(x)>0$. Further, in zero-sum games, $D^2_{12}f=(D^2_{21}f)^T$. Thus, the bilinear map $D\omega$, takes the form
\begin{align*}
    D\omega(x)&=\bmat{D^2_1f(x) & D_{12}f(x) \\ -D_{21}f(x) & -D^2_{2}f(x)}\\
    &= \bmat{D^2_1f(x) & D_{12}f(x) \\ -D_{12}^Tf(x)& -D^2_{2}f(x)}.
\end{align*}
    Let $(\lambda,v)$ be an eigenpair of $D\omega(x)$. The real part of $\lambda$, denoted $\mathrm{Re}(\lambda)$, can be written as
\begin{align*}
\textstyle\mathrm{Re}(\lambda)&=\textstyle\frac{1}{2}(\lambda +\bar \lambda)=\textstyle\frac{1}{2} (v^*D\omega^T(x)v+v^*D\omega(x)v)\\
&=\textstyle\frac{1}{2} v^*(D\omega^T(x)+D\omega(x))v\\
&=\frac{1}{2} v^*\bmat{D^2_1f(x) &0 \\ 0& -D^2_{2}f(x)}v>0
\end{align*}
where the last line follows from the positive definiteness of $\diag(D^2_1f(x),-D_2^2f(x))$
at a differential Nash equilibrium. Hence, $x$ is hyperbolic and, clearly, at this point $\det(D\omega(x))\neq 0$.
\end{proof}

The above proposition provides a strong result for the class of zero-sum games. In particular, simply due to the structure of $D\omega$, all differential Nash have the nice property of being hyperbolic, and hence, exponentially attracting under gradient-play dynamics---that is, $\dot{x}=-\omega(x)$ or its discrete time variant $x^+=x-\gamma \omega(x)$ for appropriately chosen stepsize $\gamma$. Note that numerous learning algorithms in machine learning applications of zero-sum games take this form~(see, e.g., \cite{goodfellow:2014aa, mazumdar:2018aa, mazumdar:2019aa}). 

\begin{theorem}
    For two-player, zero-sum continuous games, non-degenerate differential Nash are generic amongst
    local Nash equilibria. That is, given a generic $f\in C^r(X, \mb{R})$, all local
    Nash equilibria of the game $(f,-f)$ are (non-degenerate) differential Nash
    equilibria. 
    \label{thm:main}
\end{theorem}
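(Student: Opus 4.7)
The plan is to reduce the theorem to a jet transversality argument on the 2-jet bundle, combined with Proposition~\ref{prop:necessary}. By Proposition~\ref{prop:necessary}, at any local Nash $x$ of $(f,-f)$ one has $\omega(x)=0$, $D_1^2 f(x)\succeq 0$, and $D_2^2 f(x)\preceq 0$. Upgrading these semi-definite conditions to the strict definiteness demanded by Definition~\ref{def:DNE} is equivalent to showing that each block $D_i^2 f(x)$ is non-singular, since a non-singular PSD (resp.\ NSD) matrix is automatically PD (resp.\ ND). Once the differential Nash property is in hand, Proposition~\ref{prop:hyperbolic} supplies non-degeneracy (in fact, hyperbolicity) of $D\omega(x)$ for free. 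Note that Lemma~\ref{lem:generic} via Lemma~\ref{lem:equiv} only yields non-degeneracy of the full game Jacobian $D\omega(x)$, which is strictly weaker than what is needed: one can exhibit Morse functions $f$ with a local Nash at which a block Hessian has a zero eigenvalue, so a genuine block-wise argument is required.

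The heart of the proof is thus to exhibit a residual set $\mc{G}\subset C^r(X,\mb{R})$ on which, at every game-critical point, both block Hessians are non-singular. I would work in the 2-jet bundle $J^2(X,\mb{R})$ and, for each $i\in\{1,2\}$, consider the subset
\[ Z_i=\{\sigma\in J^2(X,\mb{R}):\ \omega(\sigma)=0\ \text{and}\ \det D_i^2 f(\sigma)=0\}, \]
where $\omega(\sigma)$ and $D_i^2 f(\sigma)$ denote the gradient and $i$-th diagonal Hessian block read off from the jet $\sigma$. The condition $\omega=0$ provides $n=m_1+m_2$ independent equations in the fiber, and the determinantal condition contributes at least one more, so $Z_i$ has codimension at least $n+1$ in $J^2(X,\mb{R})$. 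Applying Theorem~\ref{thm:jettrans} together with Proposition~\ref{prop:intersect}, and using $\dim X=n<n+1\leq\codim Z_i$, produces a residual $\mc{G}_i\subset C^r(X,\mb{R})$ on which $j^2 f(X)\cap Z_i=\emptyset$. Setting $\mc{G}=\mc{G}_1\cap\mc{G}_2$ gives the desired residual set.

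Combining the two steps, for $f\in\mc{G}$ and any local Nash $x$, $\omega(x)=0$ forces both $D_1^2 f(x)$ and $D_2^2 f(x)$ to be non-singular; the semi-definiteness from Proposition~\ref{prop:necessary} then upgrades to $D_1^2 f(x)\succ 0$ and $D_2^2 f(x)\prec 0$, so $x$ is a differential Nash, which Proposition~\ref{prop:hyperbolic} then renders non-degenerate. The main technical obstacle is that the determinantal locus $\{\det D_i^2 f=0\}$ is only a semi-algebraic subset of the Hessian fiber, not a smooth submanifold, so Theorem~\ref{thm:jettrans} must be applied stratum by stratum after stratifying symmetric matrices by rank. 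The codimension budget $n<n+1$ nevertheless survives because even the top stratum (symmetric matrices of rank $m_i-1$) already contributes codimension one in the Hessian fiber, and lower-rank strata have strictly higher codimension.
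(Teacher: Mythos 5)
Your proposal is correct and follows essentially the same route as the paper: a jet-transversality argument in $J^2(X,\mb{R})$ against the (Whitney-stratified) locus where a diagonal Hessian block is singular at a game-critical point, with the necessary conditions of Proposition~\ref{prop:necessary} upgrading the resulting non-singular semi-definite blocks to definite ones. Your two small deviations are both reasonable refinements: you treat the two bad loci $Z_1,Z_2$ separately and intersect the resulting residual sets (whereas the paper's single set $\mc{D}$, as literally written, only excludes \emph{simultaneous} singularity of both blocks), and you obtain non-degeneracy of $D\omega$ from Proposition~\ref{prop:hyperbolic} rather than from Morse genericity of $f$ via Lemmas~\ref{lem:generic} and~\ref{lem:equiv}, correctly observing that the latter alone cannot deliver block non-singularity.
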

\begin{proof}
    First, critical points of a function $f$ are those such that
    $(D_1f_1(x)\ D_2f_2(x))=0$ and hence they coincide with critical points of
    the zero-sum game---i.e., those points $x$ such that $\omega(x)=(D_1f(x),
    -D_2f(x))=0$. By Lemma~\ref{lem:equiv}, for any critical point $x$, $\det(H(x))=0$ if and
    only if $\det(D\omega(x))=0$. Hence, critical points of $f$ are
    non-degenerate if and only if critical points of the zero-sum game are
    non-degenerate. 

Consider a generic function $f$ and the
    corresponding zero-sum game $(f,-f)$. If $X$ is a smooth manifold, let $(U, \varphi)$ be a product chart on $X_1\times X_2$ that contains $x$. Suppose that $x$ is a local Nash
    equilibrium so that $\omega(x)=0$ and $D_1^2f(x)\geq 0$ and $-D_2^2f(x)\geq
    0$. By the above argument, since $f$ is generic and the critical points of
    $f$ coincide with those of the zero-sum game, $\det(D\omega(x))\neq 0$.   By Lemma~\ref{lem:generic}, critical points of a generic zero-sum
    game are non-degenerate. That is, there exists an open-dense set of
    functions $f$ in $C^r(X, \mb{R})$ such that critical points of the
    corresponding game are non-degenerate.

  Let $J^2(X, \mb{R})$ denote the second-order jet bundle containing $2$--jets $j^2f$ such that $f:X\rar \mb{R}$. Then, $J^2(X, \mb{R})$ is locally diffeomorphic to 
  \[\mb{R}^m\times \mb{R}\times \mb{R}^m\times \mb{R}^{\frac{m(m+1)}{2}}\]
  and the $2$--jet extension of $f$ at any point $x\in X$ in coordinates is given by
  \[(\varphi(x), (f\circ\varphi^{-1})(\varphi(x)), D^\varphi f(x), (D^\varphi)^2f(x))\]
  where $D^\varphi f=[D^\varphi_1f\ D^\varphi_2f]$ with $D^\varphi_j=[\partial (f\circ \varphi^{-1})/(\partial y_j^1)\ \cdots \ \partial (f\circ \varphi^{-1})/(\partial y_j^{m_i})]$ and similarly for $(D^\varphi)^2f$. Again, we note that the properties of interest (stationarity, definiteness, and non-degeneracy) are known to be coordinate invariant. 
  
  Consider a subset of $J^2(X, \mb{R})$ defined by 
\[\mc{D}=\mb{R}^m\times \mb{R}\times \{0_m\}\times Z(m_1)\times \mb{R}^{m_1\times
m_2}\times Z(m_2)\]
where $Z(m_i)$ is the subset of symmetric $m_i\times m_i$ matrices such that for
$A\in Z(m_i)$, $\det(A)=0$. Each $Z(m_i)$ is algebraic and has no interior
points; hence, we can use the Whitney stratification theorem~\cite[Chapter 1, Theorem 2.7]{gibson:1976ab} to get that each
$Z(m_i)$ is the union of submanifolds of co-dimension at least $1$. Hence $\mc{D}$
is the union of submanifolds and has co-dimension at least $m+2$. Applying the
Jet Transversality Theorem (Theorem~\ref{thm:jettrans}) and
Proposition~\ref{prop:intersect} yields an open-dense set of functions $f$ such
that when $\omega(x)=0$, $\det(D_i^2f(x))\neq 0$, for $i=1,2$.

Now, the intersection of two open-dense sets is open-dense so that we have an
open-dense set of functions $f$ in $C^r(X, \mb{R})$ such that when
$\omega(x)=0$, $\det(D_i^2f(x))\neq 0$ for each $i\in\{1,2\}$ and
$\det(D\omega(x))\neq 0$. 
This, in turn, implies that there is an open-dense set $\mc{F}$ of functions $f$ in
$C^r(X,\mb{R})$ such that for zero-sum games constructed from these
functions, local Nash equilibria are non-degenerate differential Nash
equilibria. Indeed, consider an $f\in \mc{F}$ in this set such that $x$ is a
local Nash equilibrium of $(f,-f)$. Then necessary conditions for Nash imply
that $\omega(x)=0$, $D_1^2f(x)\geq 0$ and $-D_2^2f(x)\geq 0$. However, since
$f\in \mc{F}$, $\det(D_1^2f(x))\neq 0$ and $\det(-D_2^2f(x))=(-1)^{m_2}\det(D_2^2f(x))\neq
0$. Hence, $x$ is a differential Nash equilibrium. Moreover, since $f\in
\mc{F}$, $\det(H(x))\neq 0$ which is equivalent to $\det(D\omega(x))\neq 0$ (by
Lemma~\ref{lem:equiv}). Thus, $x$ is a non-degenerate differential Nash. 
\end{proof}

As shown in Proposition~\ref{prop:hyperbolic}, all differential Nash for zero-sum games are non-degenerate simply by the structure of $D\omega$. This further implies that local Nash equilibria are generically hyperbolic critical points, meaning there are no eigenvalues of $D\omega$ with zero real part.
\begin{corollary}
    Within the class of two-player zero-sum continuous games, local Nash equilibria are generically hyperbolic critical points.
    \label{cor:hyperbolic}
\end{corollary}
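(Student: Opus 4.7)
The plan is to simply chain together the two main results already in hand: Theorem~\ref{thm:main} and Proposition~\ref{prop:hyperbolic}. Theorem~\ref{thm:main} produces an open-dense set $\mc{F}\subset C^r(X,\mb{R})$ such that, for every $f\in\mc{F}$, every local Nash equilibrium of the zero-sum game $(f,-f)$ is a (non-degenerate) differential Nash equilibrium. Proposition~\ref{prop:hyperbolic} then asserts that every differential Nash equilibrium of a two-player zero-sum game is hyperbolic. Composing the two statements immediately yields the corollary.

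Concretely, I would proceed as follows. First, fix $f\in \mc{F}$, where $\mc{F}$ is the open-dense subset of $C^r(X,\mb{R})$ provided by Theorem~\ref{thm:main}. Let $x\in X$ be a local Nash equilibrium of $(f,-f)$. By Theorem~\ref{thm:main}, $x$ is a non-degenerate differential Nash equilibrium, so $\omega(x)=0$, $D_1^2 f(x)>0$, $-D_2^2 f(x)>0$, and $\det(D\omega(x))\ne 0$. Then invoke Proposition~\ref{prop:hyperbolic} directly: since $x$ is a differential Nash of $(f,-f)$, the computation in that proposition shows
\[
\mathrm{Re}(\lambda)=\tfrac{1}{2}\,v^{*}\bmat{D_1^2 f(x) & 0 \\ 0 & -D_2^2 f(x)}v>0
\]
for every eigenpair $(\lambda,v)$ of $D\omega(x)$, so $D\omega(x)$ has no eigenvalues on the imaginary axis, i.e.\ $x$ is hyperbolic. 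Since $\mc{F}$ is open-dense, this establishes the generic property.

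I would also add a sentence clarifying what ``generic'' means in the statement: the property ``every local Nash equilibrium of $(f,-f)$ is a hyperbolic critical point of $\omega$'' holds on the open-dense (in particular residual) subset $\mc{F}\subset C^r(X,\mb{R})$ identified above. This matches the usage of ``generic'' established in the preliminaries.

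There is essentially no obstacle here; the work has already been done in Theorem~\ref{thm:main} (the jet-transversality argument) and Proposition~\ref{prop:hyperbolic} (the symmetrization argument exploiting the zero-sum structure of $D\omega$). The only care to take is to ensure that the non-degeneracy of $D_1^2 f$ and $-D_2^2 f$ obtained generically upgrades the semidefiniteness from the necessary conditions of Proposition~\ref{prop:necessary} to the strict positive definiteness required to apply Proposition~\ref{prop:hyperbolic}, which is precisely what the proof of Theorem~\ref{thm:main} delivers. Hence the corollary follows in a short paragraph.
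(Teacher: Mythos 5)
Your proof is correct and follows exactly the same route as the paper's: invoke Theorem~\ref{thm:main} to get that local Nash equilibria are (non-degenerate) differential Nash equilibria for an open-dense set of $f$, then apply Proposition~\ref{prop:hyperbolic} to conclude hyperbolicity. The extra remarks on the meaning of ``generic'' and on the upgrade from semidefiniteness to strict definiteness are consistent with, and slightly more explicit than, the paper's own argument.
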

\begin{proof}
  Consider a two-player, zero-sum game $(f,-f)$ for some generic sufficiently smooth $f\in C^r(X, \mb{R})$. Then, by Theorem~\ref{thm:main}, a local Nash equilibria $x$ is a differential Nash equilibria. Moreover,  by Proposition~\ref{prop:hyperbolic}, $x$ is hyperbolic so that all eigenvalues of $D\omega(x)$ must have strictly positive real parts. This implies that all such points are hyperbolic critical points of the gradient dynamics $\dot{x}=-\omega(x)$. 
\end{proof}

\subsection{Structural Stability}
\label{subsec:structuralstability}
Genericity gives a formal mathematical sense of 'almost all' for a certain property---in this case, non-degeneracy and further hyperbolic. In addition, we show that (non-degenerate) differential Nash are structurally stable, meaning that they persist under smooth perturbations within the class of zero-sum games.
\begin{theorem}
    For zero-sum games, differential Nash equilibria are {structurally stable}: given $f \in C^r(X_1 \times X_2,\mb{R})$, $g\in C^r(X_1 \times X_2,\mb{R})$, and a differential Nash equilibrium $(x_1,x_2) \in X_1 \times X_2$, there exists a neighborhoods $U \subset \mb{R}$ of zero and $V \subset X_1 \times X_2$ such that for all $t\in U$ there exists a unique differential Nash equilibrium $(\tilde x_1,\tilde x_2)  \in V$ for the zero-sum game $(f+tg,-f-tg)$. 
    \label{thm:ss_zsg}
\end{theorem}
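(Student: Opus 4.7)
The natural tool here is the implicit function theorem applied to the parametrized critical point equation, and the work is made easy by Proposition~\ref{prop:hyperbolic}. Define the map
\[F:X_1\times X_2\times \mathbb{R}\rightarrow T^\ast(X_1\times X_2),\qquad F(x,t)=\omega_{f+tg}(x)=\omega_f(x)+t\,\omega_g(x),\]
working in a product coordinate chart around the given differential Nash $x=(x_1,x_2)$ (so that, as in the proof of Lemma~\ref{lem:equiv} and Theorem~\ref{thm:main}, we may treat the Euclidean case and invoke coordinate invariance at the end). Then $F(x,0)=\omega_f(x)=0$ by assumption, and the partial derivative $D_xF(x,0)=D\omega_f(x)$ is, by Proposition~\ref{prop:hyperbolic}, hyperbolic, and in particular non-degenerate. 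Hence $D_xF(x,0)$ is a linear isomorphism.

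The first step is to invoke the implicit function theorem at $(x,0)$. This produces an open neighborhood $U\subset\mathbb{R}$ of $0$, an open neighborhood $V\subset X_1\times X_2$ of $x$, and a unique $C^{r-1}$ map $t\mapsto \tilde x(t)\in V$ with $\tilde x(0)=x$ such that $\omega_{f+tg}(\tilde x(t))=0$ for all $t\in U$, and $\tilde x(t)$ is the only zero of $\omega_{f+tg}$ in $V$ for each such $t$. This already yields uniqueness of the critical point, which is the first of the two conditions defining a differential Nash.

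The second step is to show that the strict second-order conditions persist along the curve $\tilde x(t)$. At $t=0$ we have $D_1^2f(x)\succ 0$ and $-D_2^2f(x)\succ 0$ by definition of differential Nash. Since positive definiteness of a symmetric matrix is an open condition (equivalently, the smallest eigenvalue depends continuously on the matrix entries) and the maps $(t,y)\mapsto D_1^2(f+tg)(y)$ and $(t,y)\mapsto -D_2^2(f+tg)(y)$ are continuous, we can shrink $U$ and $V$ if necessary so that $D_1^2(f+tg)(\tilde x(t))\succ 0$ and $-D_2^2(f+tg)(\tilde x(t))\succ 0$ for all $t\in U$. Therefore $\tilde x(t)$ is a differential Nash equilibrium of $(f+tg,-f-tg)$ for every $t\in U$, and it is the unique such point in $V$ because any other differential Nash in $V$ would in particular be a zero of $\omega_{f+tg}$ in $V$, contradicting the IFT uniqueness.

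The only real obstacle would have been guaranteeing that $D\omega_f(x)$ is invertible at a differential Nash, but this is exactly the content of Proposition~\ref{prop:hyperbolic}, which is special to the zero-sum structure (in the general-sum case one must separately assume non-degeneracy). Passing from the Euclidean argument above to a general manifold $X_1\times X_2$ is handled exactly as in the proofs of Lemma~\ref{lem:equiv} and Theorem~\ref{thm:main}: work in a product chart around $x$, carry out the IFT argument there, and use the coordinate invariance of stationarity and definiteness to conclude the statement intrinsically on $X_1\times X_2$.
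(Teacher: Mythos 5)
Your proof is correct and follows essentially the same route as the paper: apply the implicit function theorem to the parametrized critical-point equation $\omega_{f+tg}(x)=0$, using the non-degeneracy of $D\omega_f(x)$ at a differential Nash guaranteed by Proposition~\ref{prop:hyperbolic}. You are in fact slightly more careful than the paper in explicitly verifying that the strict second-order conditions $D_1^2(f+tg)(\tilde x(t))\succ 0$ and $-D_2^2(f+tg)(\tilde x(t))\succ 0$ persist by openness of positive definiteness, a step the paper's proof leaves implicit.
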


\begin{proof}
	Define the smoothly perturbed cost function $\tilde f: X_1\times X_2 \times \mb{R} \rar \mb{R}$ by
	$\tilde f(x,y,t)=f(x,y)+tg(x,y)$,
 	and its differential game form $\tilde \omega:  X_1\times X_2 \times \mb{R} \rar T^*(X_1 \times X_2)$ by
 	\[ \tilde \omega(x,y,t)=(D_1(\tilde f(x,y)+tg(x,y),-D_2(\tilde f(x,y)+tg(x,y)),\]
 	for all $t \in \mb{R}$ and $(x,y) \in X_1 \times X_2$. 

 	Since $(x_1,x_2)$ is a differential Nash equilibrium, $D\tilde\omega(x,y,0)$ is necessarily non-degenerate (see the proof of Corollary~\ref{cor:hyperbolic}). Invoking the implicit function theorem~\cite{lee:2012aa}, there exists neighborhoods $V\subset \mb{R}$ of zero and $W \subset X_1 \times X_2$ and a smooth function $\sigma \in C^r(V,W)$ such that for all $ t \in V$ and $(x_1,x_2) \in W$, \[\tilde \omega(x_1,x_2,s)=0 \iff (x_1,x_2)=\sigma(t).\]

 	Since $\tilde \omega$ is continuously differentiable,
 	there exists a neighborhood $U \subset W$ of zero such that $D\tilde \omega(\sigma(t),t)$ is invertible for all $t \in U$. Thus, for all $t \in U$, $\sigma(t)$ must be the unique Nash equilibrium of  $(f+tg|_W,-f-tg|_W)$. 
\end{proof}
We note that both the genericity and structural stability results follow largely from the fact that the class of two-player zero-sum games are defined completely in terms of a single (sufficiently) smooth function $f\in C^r(X,\mb{R})$, so that its fairly straightforward to lift the properties of genericity and structural stability to the class of zero-sum games from the class of smooth functions. We also remark that the perturbations considered here are those such that the game remains in the class of zero-sum games; that is, the function $f$ is smoothly perturbed and this induces the perturbed zero sum game $(f+tg,-f-tg)$.

\section{Examples}
\label{sec:num_ex}



To illustrate the implications of structural stability, we provide a simple example. Consider a classic set of zero-sum continuous games known as biliear games. Such games have similar characteristics as bimatrix games played on the simplex; in particular, bimatrix games have the same cost structure as bilinear games where the stratgegy space of the former is considered to be a probability distribution over the finite set of pure strategies. This is particularly interesting since it demonstrates that interior equilibria of such games can be altered arbitrarily small perturbations. 

\begin{example}
    Consider two-players with decision variables $x \in\mb{R}^{d_x}$ and $y \in \mb{R}^{d_y}$ respectively, playing a zero-sum game on the function:
    \[f(x,y)=x^T A y\]
    Where $A \in \mb{R}^{d_x \times d_y}$. The $x$ player would like to minimize $f$ while the $y$ player would like to maximize it. Looking at $\omega$ for this game, we can see that the local Nash equilibria live in $\mc{N}(A) \times \mc{N}(A^T)$, where $\mc{N}(A)$ and $\mc{N}(A^T)$ denote the nullspaces of $A$ and $A^T$ respectively:
    \begin{align*} 
    \omega(x,y) &= \bmat{Ay \\ -A^Tx}
    \end{align*}
     We note that the local Nash equilibria are not differential Nash equilibria, and that $D\omega$ has purely imaginary eigenvalues everywhere since it is skew-symmetric. Thus the local Nash equilibria are non-hyperbolic and this a non-generic case. Letting $f_\epsilon=f(x,y)-\frac{\epsilon}{2} ||x||^2$, we see that $\omega$ for this perturbed game (denoted $\omega_\epsilon$) has the form:
    \begin{align*} 
    \omega_\epsilon(x,y) &= \bmat{Ay - \epsilon x \\ -A^Tx}
    \end{align*}
    This perturbation fundamentally changes the critical points, and looking at $D\omega_\epsilon$, we can see that for any $\epsilon>0$, there are no more local Nash equilibria:
    \begin{align*} 
    D\omega_\epsilon(0,0) &= \bmat{ - \epsilon I_{d_x} & A \\ -A^T & 0}
    \end{align*}
    Since any arbitrarily small perturbation of this form can cause all of the local Nash equilibria to change, these games cannot be structurally stable.
\end{example}

We now show how this behavior extends to more complicated settings. Specifically we present an example of a game of rock-paper-scissors where both players have stochastic policies over the three actions which are parametrized by weights. The following example highlights how this classic problem is non-generic and the behavior changes drastically when the loss is perturbed in a small way.

\begin{example}
    Consider the game of rock-paper-scissors where each player has three actions $\{0,1,2\}$, with payoff matrix: 
    \[ M=\bmat{\ 0 & -1 & \ 1 \\ \ 1 & \ 0 & -1 \\ -1 & \ 1 & \ 0}\]
    Each player $i \in \{1,2\}$ has a policy or mixed strategy  $\pi_i$ parametrized by a set of weights $\{w_{ij}\}_{j \in \{0,1,2\}}$ of the form:
    \[ \pi_i(j)=\frac{\exp(-\beta_i w_{ij})}{\sum_{k=0}^2 \exp(-\beta_i w_{ij})}\]
    Where $\beta_i$ is a hyper-parameter for player $i$ that determines the 'greediness' of their policy with respect to their set of weights. For simplicity, we treat $\pi_i$ as a vector in $\mb{R}^3$. Each player would like to maximize their expected reward given by
    \[ f(w_1,w_2)= \pi_1^T M \pi_2. \]
    We note that there is a continuum of local Nash equilibria for the policies $\pi_i=[\frac{1}{3},\frac{1}{3},\frac{1}{3}]$ for $i \in \{1,2\}$ and that this is achieved whenever each player has all of their weights equal. 
\end{example}

\begin{figure}[h]
\begin{center}
  \includegraphics[width=0.75 \columnwidth]{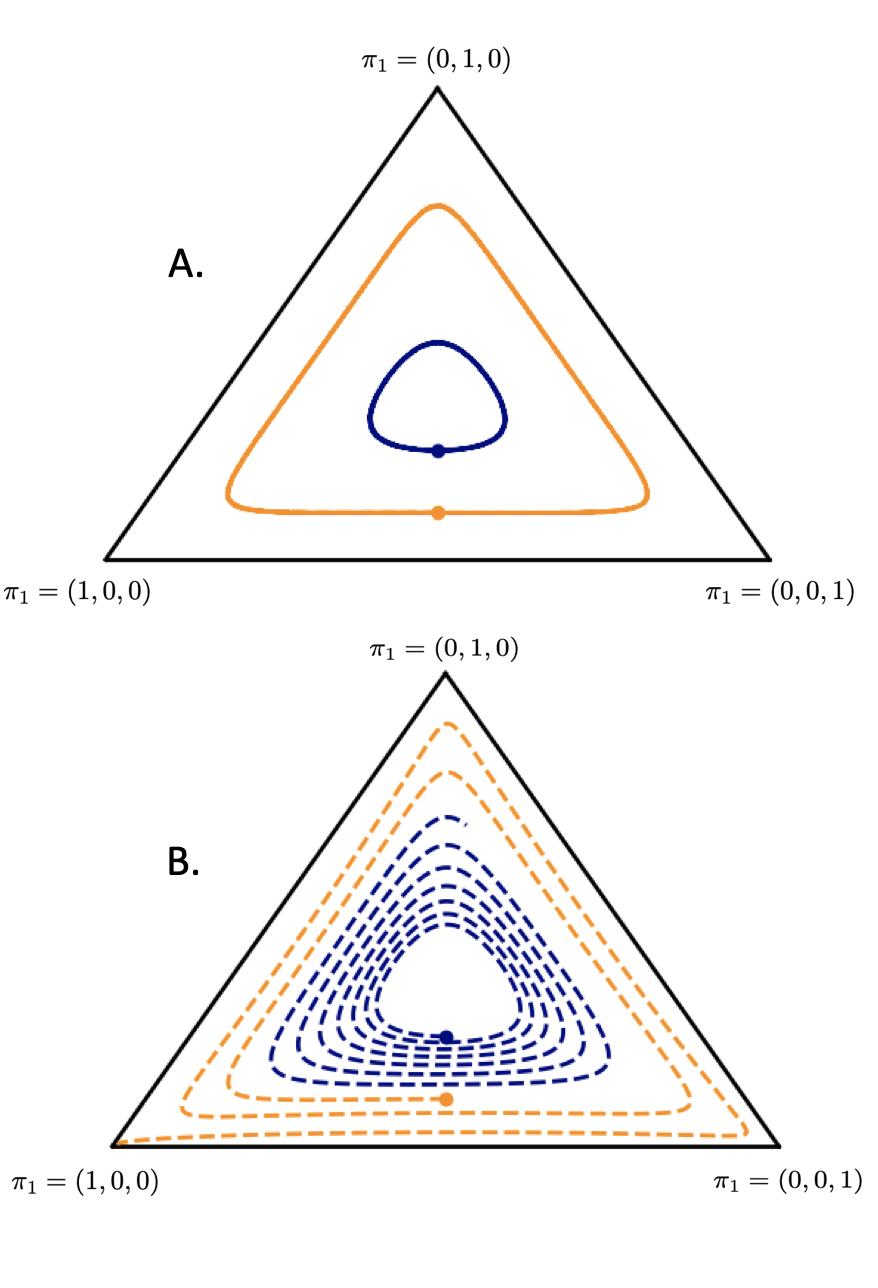}
  \caption{ The trajectory of the policy of player 1 under gradient-play for A. rock-paper scissors and B. a perturbed version of rock-paper-scissors.  A. Player 1 cycles around the local Nash equilibrium of $(\frac{1}{3},\frac{1}{3},\frac{1}{3})$ from either initialization (shown with circles). We remark that player 1's time average policy is in fact $(\frac{1}{3},\frac{1}{3},\frac{1}{3})$. B. Player 1 diverges from the local Nash equilibrium from either initialization for the perturbed game given by \eqref{eq:rps_perturbed}.}
  \label{fig:rps}
  \end{center}
\end{figure}

In Fig.~\ref{fig:rps} we show the trajectories of the policy of player $1$, when $\beta_1=\beta_2=1$ and both players use gradient descent to update their weights at each iteration. In Figure~\ref{fig:rps}A. we see that player 1 cycles around the local Nash equilibrium in policy space. In Figure~\ref{fig:rps}B. we show the trajectories of the policy of player $1$, starting from the same initializations, but for a perturbed version of the game defined by
\begin{align}
    f_{\epsilon}(w_1,w_2)= \pi_1^T M \pi_2 +\epsilon g(w_1,w_2)
    \label{eq:rps_perturbed}
\end{align}
 where $\epsilon= 1$e-$3$ and $g(x,y)=||y||^2- ||x||^2$. Here we can see that this relatively small perturbation causes a drastic change in the behavior where player 1 diverges from the Nash of the original game and converges to the sub-optimal policy of always playing action zero.

\section{Discussion and Concluding Remarks}
\label{sec:conclusion}
The focus of this paper is on the genericity and structural stability of a particular refinement of the local Nash equilibrium concept---namely, differential Nash equilibria---within the class of two-player, zero-sum continuous games. 
The renewed interest in zero-sum games on continuous action spaces is primarily due to the widespread adoption of game theoretic tools in areas such as robust reinforcement learning and adversarial learning including generative adversarial networks. 
For instance, zero-sum continuous game abstractions have shown to be particularly adept at learning robust policies for a wide-variety of tasks from classification to prediction to control.

Most learning approaches are based on local information such as gradient updates, and as such, representations of Nash equilibria that are amenable to computation such as the differential Nash concept are extremely relevant. Much of the existing convergence analysis for machine learning algorithms based on game-theoretic concepts proceeds under the structural assumptions implicit in the definition of the differential Nash equilibrium concept. In this paper, we show that characterizations such as these are generic and structurally stable; hence, the aforementioned structural assumptions only rule out a measure zero set of games, and the desired properties are robust to smooth perturbations in player costs.

\appendix
\subsection{Additional Mathematical Preliminaries}
\label{app:prelims}
In this appendix, we provide some additional mathematical preliminaries; the interested reader should see standard references for a more detailed introduction~\cite{lee:2012aa, abraham:1988aa}.

A \emph{smooth manifold} is a topological
manifold with a smooth atlas. In particular, we use the term \emph{manifold} generally; we specify whether it
is a finite-- or infinite--dimensional manifold only when necessary.
If a covering by charts takes their values in a
Banach space $E$, then $E$ is called the \emph{model space} and we say that $X$
is a $C^r$--\emph{Banach manifold}. 
For a vector space $E$, we define the vector space of continuous
$(r+s)$--multilinear maps $T_s^r(E)=L^{r+s}(E^\ast, \ldots, E^\ast, E,
\ldots, E; \mb{R})$ with $s$ copies of $E$ and $r$ copes of $E^\ast$ and where $E^\ast$ denotes the dual. 
Elements of $T_s^r(E)$ are \emph{tensors} on $E$, 
and
$T^r_s(X)$ 
denotes the 
vector bundle of such tensors%
~\cite[Definition~5.2.9]{abraham:1988aa}. 

 Suppose $f:X\rar M$ is a mapping of one
 manifold $X$ into another $M$.
 Then, we can interpret
 the derivative of $f$ on each chart at $\pt$ as a linear mapping 
   $df(\pt):T_\pt X\rar T_{f(\pt)}M.$
 When $M=\mb{R}$, the collection of such
maps defines a \emph{$1$--form} $df:X\rar T^\ast X$. 
 Indeed, a
$1$--form is a continuous map $\omega:X\rar T^\ast X$ satisfying $\pi\circ
\omega=\text{Id}_X$ where $\pi:T^\ast X\rar X$ is the natural projection mapping
$\omega(x)\in T^\ast_x X$ to $x\in X$.

At a critical point $\pt\in X$ (i.e., where $df(x)=0$), 
there is a uniquely determined continuous, symmetric bilinear form 
$d^2f(\pt)\in T_2^0(X)$ such that $d^2f(\pt)$
is defined for all $v,w\in T_xX$ by
  $d^2(f\circ \vphi^{-1})(\vphi(\pt))(v_\vphi, w_\vphi)$
where $\vphi$ is any product chart at $x$ and $v_\vphi, w_\vphi$ are the local representations
of $v,w$ respectively~\cite[Proposition in \S 7]{palais:1963aa}. We say $d^2f(\pt)$ is
\emph{positive semi--definite} if there exists $\alpha\geq 0$ such that for any chart
$\vphi$,
\begin{equation}
d^2(f\circ \vphi^{-1})(\vphi(x))(v,v)\geq \alpha \|v\|^2, \ \ \forall \ v\in
  T_{\vphi(x)}E.
  \label{eq:posdef}
\end{equation}
If $\alpha>0$, then we say $d^2f(\pt)$ is \emph{positive--definite}.
Both critical points and positive definiteness are invariant
with respect to the choice of coordinate chart.

Consider smooth manifolds $X_1,X_2$. 
The product space $X_1\times X_2$ is naturally a smooth
manifold~\cite[Definition~3.2.4]{abraham:1988aa}. 
 There is a canonical isomorphism at each point such that the cotangent bundle
 of the product manifold splits:
  \begin{equation} 
    T^\ast_{(\pt_1,x_2)}(X_1\times X_2)\cong
    T^\ast_{\pt_1} X_1\oplus
    T^\ast_{\pt_2} X_2
    \label{eq:canon}
  \end{equation}
  where $\oplus$ denotes the direct sum of vector spaces.
 There are natural bundle maps $\psi_{X_1}:T^\ast(X_1\times X_2)\rar T^\ast(X_1\times X_2)$ annihilating the all the components other
 than those corresponding to $X_i$ of an element in
 the 
 cotangent bundle. 
   In particular, 
   $\psi_{X_1}(\omega_1, \omega_2)=(0_1, \omega_2)$ and $\psi_{X_2}(\omega_1,\omega_2)=(\omega_1,0_2)$
 where $\omega=(\omega_1,\omega_2)\in T^\ast_x(X_1\times
 X_2)$ and $0_j\in T^\ast_{x_j} X_j$ for each $j\in\{1,2\}$ is
 the zero functional.
 

\bibliographystyle{IEEEtran}
\bibliography{main}
\end{document}